\def\shadowbox{\hbox{\rule[-0.0ex]{0.1ex}{1.2ex}%
\hspace{-0.1ex}\rule[-0.0ex]{1.2ex}{0.1ex}%
\hspace{0.0ex}\rule[-0.0ex]{0.1ex}{1.2ex}\hspace{-1.3ex}%
\rule[1.15ex]{1.25ex}{0.1ex}\hspace{-0.0ex}\rule[-0.25ex]{0.3ex}{1.1ex}%
\hspace{-1.2ex}\rule[-0.25ex]{1.1ex}{0.25ex}}}
\def\qed{\ifmmode \hbox{\hfill\shadowbox}
     \else \hphantom{x}\hfill\shadowbox \fi}
\newtheorem{theorem}{Theorem}[section]
\newtheorem{lemma}[theorem]{Lemma}
\newtheorem{proposition}[theorem]{Proposition}
\newtheorem{Signal Set}[theorem]{Signal Set}
\def\R {\mathbb{R}}
\def\E {\mathbb{E}}
\def\Z {\mathbb{Z}}
\def\S {\mathcal{S}}
\def\P {\mathbb{P}}
\def\YL {Y^{(L)}}
\def\ks {k^{*}}
\begin{document}

\title{Expected Supremum of a Random Linear Combination of Shifted Kernels}
%~\IEEEmembership{Fellow,~IEEE,}
\author{Holger Boche, Brendan~Farrell,
        Michel Ledoux and Moritz Wiese
\thanks{\noindent H. Boche and M. Wiese 
are with the Lehrstuhl f\"ur Theoretische Informationstechnik, 
Technische Universit\"at M\"unchen, Arcisstr. 21, 80333 M\"unchen, Germany. 
B. Farrell was with the Lehrstuhl f\"ur Theoretische Informationstechnik when this 
work was completed. 
He is now with the Department of Computing and Mathematical Sciences, 
California Institute of Technology, 1200 E. California Blvd., Pasadena, CA 91125, USA. 
M. Ledoux is with the Institut de Math\'ematiques de Toulouse, Universit\'e de Toulouse, 
31062 Toulouse, France and Institut Universitaire de France.
\newline
E-mail: boche,wiese@tum.de,\;farrell@cms.caltech.edu,\;ledoux@math.univ-toulouse.fr
\newline 
H. Boche was supported by start-up funds of the Technische Universit\"at M\"unchen. 
}}
\date{}

\maketitle

\begin{abstract}
We address the expected supremum of a 
linear combination of shifts of the sinc kernel with random coefficients. 
When the coefficients are Gaussian, the expected supremum is of order $\sqrt{\log n}$, 
where $n$ is the number of shifts. 
When the coefficients are uniformly bounded, the expected supremum is of order $\log\log n$. 
This is a noteworthy difference to orthonormal functions on the unit interval, where the 
expected supremum is of order $\sqrt{n\log n}$ 
for all reasonable coefficient statistics.  
\end{abstract}

{\bf Keywords: } Supremum, Sinc Kernel, Gaussian and Bernoulli Coefficients.

$\;$ 

{\bf AMS Classification Numbers: } 60G70, 42A61, 94A12

\section{Introduction}

Perhaps the most fundamental functions in signal processing are shifts of the sinc kernel $\frac{\sin \pi t}{\pi t}$. 
This kernel decays slowly in time, and consequently it is generally not used in practice. 
Nonetheless, it is the starting point, certainly historically, for much of signal processing, information theory and sampling theory. 
If each shifted kernel has a random coefficient, it is natural to investigate the 
properties of the resulting signal. 
Here we address the expected supremum of such a signal. 
We let $\{a_k\}_{k=1}^\infty$ be independent random variables and consider the quantity 
\begin{equation}
\S_n=\sup_{t\in \R}\left|\sum_{k=1}^n a_k \frac{\sin \pi (t-k)}{\pi (t-k)}\right|.\label{defS}
\end{equation}
We investigate the behavior of the peak when the  $\{a_k\}_{k=1}^\infty$ are Gaussian 
and symmetric $\pm 1$ random variables, 
and show that 
\emph{in the Gaussian case $\E \S_n\sim \sqrt{\log n}$ while in the $\pm 1$ case $\E  \S_n\sim \log \log n$}. 
This result is fundamental enough to be relevant in numerous settings. 
One example is when coefficients are quantized and the $\{a_k\}_{k=1}^\infty$ 
represent the difference between an actual coefficient and its quantized value. 
Another is when the coefficients are viewed as carrying information, and one is concerned 
with the peak value of the signal. 
We discuss this briefly below.

\section{Problem Formulation and Main Result}\label{result}

We compare a signal of the type given in~\eqref{defS} with linear combinations of orthonormal functions on the unit interval. 
Here, the fundamental theorem, due to 
Kashin and Tzafriri~\cite{KT95}, 
states 
that if $\{\phi_k\}_{k=1}^\infty$ are uniformly bounded (i.e. in $\|\cdot\|_\infty$) orthonormal functions on $[0,1]$ and 
$\{a_k\}_{k=1}^\infty$ are independent symmetric random variables with a uniform bound on the third moment, then 
\begin{equation}
\E \sup_{t\in[0,1]}\left| \sum_{k=1}^n a_k\phi_k(t)\right|\sim \sqrt{n\log n}.
\end{equation}
Thus, for uniformly bounded functions on the unit interval, the necessary linear combinations 
occur and result in  
Gauss-like behavior. 
Consequently, the statement is not sensitive to the distribution of the individual coefficients.  
Note, though, that the uniform bound on the functions  $\{\phi_k\}_{k=1}^\infty$ is essential to the result. 

The theorem just stated applies to two systems of practical importance, namely the Fourier and 
Walsh systems, which are known in electrical engineering as OFDM and CDMA systems. 
Here the motivation for 
understanding the behavior of a signal's peak is that amplifiers particularly 
distort or eliminate the peak. 
This has led to extensive research in communications engineering on what is called the 
\emph{peak-to-average power ratio}. 
See the book~\cite{Lit07} for an overview of this area for OFDM (Fourier) 
and~\cite{BF11a} for  
recent work on the CDMA (Walsh) case.

Here we address similar questions for the shifted sinc kernel on the real line. 
We make several introductory observations about the equation~\eqref{defS} before formally stating the problem. 
Note that when $t=l$, $l$ an integer, the value of the function inside the absolute value bars equals $a_l$,  
so that  $\max_{1\leq k\leq n}|a_k|$ is an a priori lower bound on $\S_n$. 
Thus, a first point of interest is to compare 
the signal's peak off the set of integers to that at the integers. 
A second point is to compare the peak behavior when the random coefficients are $\pm 1$ random variables and when they are  
Gaussian variables. 
For example, the simple lower bound $\max_{1\leq k\leq n}|a_k|$ does not grow in $n$ for $\pm 1$ random variables. 
(Uniformly bounded, zero-mean random variables will be shown to behave the same as random $\pm 1$, and so we discuss only the latter at this point.) 
In such a linear combination one has sums of other independent random variables, 
yet it is unclear a priori if they behave close to Gaussian random variables.

We clarify the  dichotomy between orthonormal functions on the unit interval (multi-carrier systems in communications) and 
shifted kernels on the real line (single-carrier systems). 
In the former, the expected peak value behaves like $\sqrt{n\log n}$ as long as the individual distributions satisfy a third moment condition. 
In the latter case, the behavior depends on the individual distributions. 
The linear combination of $\pm 1$'s does \emph{not} behave like Gaussian random variables and, in particular, the expected 
value of the supremum is significantly smaller in the $\pm 1$ case. 
%In the Gaussian case $\E \S_n\sim \sqrt{\log n}$ and in the $\pm 1$ case $\E S_n\sim \log \log n$. 
The behavior in the Gaussian case follows from well-known theorems due to Slepian and Sudakov, and so the contribution here is the $\pm 1$ case.

Recall the definition of $\S_n$ from equation~\eqref{defS}. 
Our main theorem is the following. 
\begin{theorem}\label{maintheorem}
Assume the random variables $\{a_k\}$ are independent and distributed according to $\mathcal{N}(0,\sigma^2)$. 
Then there exists a constant $c_1>0$ independent of $\sigma$ such that 
\begin{equation*}
c_1 \sigma\sqrt{\log n}\leq \E\; \S_n\leq c^{-1}_1\sigma\sqrt{\log n}
\end{equation*}
for all large $n$. 
If $\{a_k\}_{k=1}^\infty$ have symmetric distribution  
and there exist constants $M$ and $m$ such that 
$\P(|a_k|>M)=0$ and $\E|a_k|\geq m>0$ for all $k$, then 
there exists a constant $c_2>0$ independent of $M$ and $m$ such that 
\begin{equation*}
c_2 m\log\log n\leq \E\; \S_n\leq c_2^{-1}M\log\log n
\end{equation*}
for all large $n$. 
\end{theorem}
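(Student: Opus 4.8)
The Gaussian case is, as the authors say, essentially classical, so let me sketch it first and then concentrate on the genuinely hard $\pm 1$ case. For the Gaussian upper bound, I would restrict attention to a compact interval around $\{1,\dots,n\}$: outside $[1-C,n+C]$ the tails of the sinc functions force the sum to be dominated by its values near the bulk, since $\sum_k \tfrac{1}{(t-k)^2}$ is integrable away from the integers, and one can bound the contribution from far-away $t$ by a crude estimate. On the compact interval, $X_t=\sum_{k=1}^n a_k\,\mathrm{sinc}(t-k)$ is a centered Gaussian process with $\E X_t^2=\sigma^2\sum_k \mathrm{sinc}^2(t-k)\le\sigma^2$ (Parseval / the fact that shifted sincs are orthonormal), and a short computation bounds the canonical metric $d(s,t)=(\E(X_s-X_t)^2)^{1/2}$ by $C\sigma|s-t|$, so Dudley's entropy integral over an interval of length $O(n)$ gives $\E\sup \le c\sigma\sqrt{\log n}$. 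For the lower bound one just uses $\S_n\ge\max_{1\le k\le n}|a_k|$, whose expectation is $\gtrsim\sigma\sqrt{\log n}$ for i.i.d.\ Gaussians; the constant $c_1$ is visibly independent of $\sigma$ by scaling.

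For the $\pm1$ (and uniformly bounded symmetric) case the lower bound $c_2 m\log\log n\le\E\S_n$ is where the real content lies, and I expect the plan to be: produce a specific point $t^*$ (depending on the realization of the $a_k$) at which the sum is large with decent probability. The natural attempt is to look near a half-integer or, more cleverly, to exploit the slow $1/(t-k)$ decay: at $t=\tfrac12$ the coefficients of $a_1,a_2,\dots,a_n$ are $\tfrac{\pm1}{\pi(1/2-k)}$, i.e.\ like $\tfrac{1}{k}$, so $\sum a_k\,\mathrm{sinc}(\tfrac12-k)$ is a weighted Rademacher sum with weights $\sim 1/k$; its variance is $\sim\log n$ but that only recovers $\sqrt{\log n}$ in expectation, not enough and in the wrong direction. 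The right idea must be a conditioning/greedy argument: reveal the $a_k$ in blocks of geometrically growing length, and at each stage choose the location $t$ so that one more block of sinc tails adds a guaranteed positive increment of constant size; after $\log\log n$ blocks one accumulates $\gtrsim\log\log n$. Concretely I would split $\{1,\dots,n\}$ into $\sim\log\log n$ dyadic blocks, and for a point $t$ sitting just to the left of block $j$ the sum $\sum_{k\in B_j} a_k/(t-k)$ over that block, conditioned to have a favorable sign pattern on a sub-block, contributes $\Omega(m)$; the hard part is to do this simultaneously/consistently across blocks at a single $t$, which I think is arranged by taking $t$ near a fixed half-integer and observing that the blocks' contributions are, after conditioning, all of the same sign.

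For the $\pm1$ upper bound $\E\S_n\le c_2^{-1}M\log\log n$, I would again localize to a compact interval and split $t$ into ``near an integer'' and ``far from all integers.'' The key estimate is a chaining/union bound: over a $\delta$-net of the interval one controls $\sum_k a_k\,\mathrm{sinc}(t-k)$ using Hoeffding/Bernstein, noting that for $t$ at distance $\ge\eta$ from every integer the $\ell^\infty$ norm of the coefficient vector is $\le M/(\pi\eta)$ while the $\ell^2$ norm is $\le M$, so a Bernstein bound gives Gaussian-type behavior only up to deviation $\sim M\sqrt{\log(1/\eta)}$ but bounded-type behavior (deviation $\sim \eta^{-1}\log(\cdot)$) beyond that; optimizing, the worst scale is $\eta\sim 1/\log n$, which is exactly what produces $\log\log n$ rather than $\sqrt{\log n}$. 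One must also handle $t$ within $1/\log n$ of an integer $k$ separately: there the $a_k$ term dominates and contributes only $M$, and the remaining tail sum is again a bounded-coefficient Rademacher sum. Carefully combining the near and far regimes, summing the union bound over the $O(n\log n)$ net points and integrating the tail gives the $M\log\log n$ bound with a constant independent of $M$ and $m$. I expect the main obstacle throughout to be the $\pm1$ lower bound's block-conditioning argument — getting a single $t$ that works for all $\Theta(\log\log n)$ scales at once — and, on the upper side, correctly identifying $\eta\sim1/\log n$ as the critical scale and showing no other scale does worse.
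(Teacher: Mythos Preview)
Your Gaussian arguments and the bounded upper bound are essentially what the paper does. The paper discretises $[-n,2n]$ to an $O(n^2)$ grid (controlling the oscillation between grid points by $C\max_k|a_k|$) and then, for each grid point, splits the weighted sum into the $L=\log n$ nearest terms---trivially bounded by $M(1+\log L)=M(1+\log\log n)$---and a tail with $\ell^2$-norm $\le M/\sqrt{L}$, handled by Hoeffding plus a union bound. This is exactly your ``critical scale $\eta\sim 1/\log n$'' observation in different clothing: your near/far dichotomy in $t$ corresponds to their near/far dichotomy in the index $l$. (Minor slip: at $t=\tfrac12$ the variance of the weighted Rademacher sum is $\sum_k 1/(k-\tfrac12)^2=O(1)$, not $\sim\log n$; your conclusion that a single fixed $t$ is useless is still correct, indeed more so.)

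The genuine gap is the bounded lower bound. Your plan---dyadic blocks $B_1,\dots,B_J$ with $J\sim\log\log n$ around a point $t$, each contributing $\Omega(m)$ after conditioning---does not close as stated. For block $B_j$ at distance $\sim 2^j$ from $t$, the contribution $\sum_{k\in B_j}a_k/(t-k)$ has mean zero and standard deviation $O(M\,2^{-j/2})$, so forcing it to be $\Omega(m)$ requires the signs in $B_j$ to be almost fully aligned, an event of probability $\exp(-c\,2^j)$. The intersection over $j\le J$ then costs $\exp(-c\,2^{J})\approx n^{-c}$, and you have not explained how to manufacture enough independent candidate points $t$ to beat this, nor how ``revealing blocks and choosing $t$ adaptively'' could work when moving $t$ destroys the previous blocks' contributions. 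The paper sidesteps the multiscale difficulty entirely: it invokes the Erd\H{o}s--R\'enyi law of long runs to find, with probability $1-o(1)$, a \emph{single} interval $I$ of length $\sim c_1\log n$ in which the fraction of $+1$'s among the $\epsilon_k$ exceeds $\tfrac{1+c_2}{2}$. Centering $t$ at the midpoint $k^*$ of $I$, the inner sum is then (via a simple combinatorial lemma on the conditional mean given the number of $+1$'s) at least a constant times $\sum_{l=1}^{c_1\log n/2} m/l\sim m\log\log n$, while the contribution from outside $I$ is $O_M(1)$ by Hoeffding uniformly over all candidate midpoints. So the $\log\log n$ arises not from summing $\Theta(\log\log n)$ constant increments across scales, but from one harmonic sum of length $\log n$ inside a single favorable run; that Erd\H{o}s--R\'enyi input is the idea your sketch is missing.
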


Before turning to the proof, we briefly highlight how the result is tied to the non-integrability of the sinc kernel. 
If a kernel is unbounded, then a linear combination of shifts of the kernel will generally be unbounded, and so we may consider only 
bounded kernels. 
If the kernel $s$ is bounded and integrable, then one has 
\begin{equation*}
 \sup_{t\in[0,1)} \sum_{k\in\Z}|s(k+t)|<C.
\end{equation*}
Therefore, 
\begin{eqnarray*}
\sup_{t\in\R} \Big|\sum_{k=1}^n a_k s(t-k)\Big|&\leq & \max_{1\leq k\leq n}|a_k|\sup_{t\in[0,1)}\sum_{k\in\Z} |s(t-k)|\\
&\leq & C \max_{1\leq k\leq n}|a_k|.
\end{eqnarray*}
Therefore, if the random variables $\{a_k\}_{k=1}^\infty$ are uniformly bounded, 
a linear combination of the form $\sum_{k=1}^n a_k s(t-k)$ is also uniformly bounded. 
Thus, the statement in Theorem~\ref{maintheorem} is a consequence, as one expects, of the non-integrability of the sinc kernel.

\section{Proof of Main Result}\label{sectionproof}

As commented earlier, the Gaussian case follows from theorems of Slepian and Sudakov. 
For the $\pm 1$ case, we first reduce the problem to determining the expected maximum over a finite set. 
Working with this finite set will be the majority of the paper. 
We first prove a proposition that is unencumbered by several details that are necessary for the full proof. 
We do this to emphasize the aspect of the proof that is most important, namely the statement of the proposition.  
Additionally, we think that the proposition could quite likely be useful elsewhere. 
The proof of the main theorem then brings the original problem statement to the form addressed by the proposition.

%or the compact support case, we isolate the most important aspect and address it in the following proposition. 
%e also state this proposition because it may be useful in other contexts. 
%he proof of Theorem~\ref{maintheorem} will then consist in relating our signal to the proposition. 

\begin{proposition}\label{propdiscrete}
Assume $\{a_k\}_{k=1}^\infty$ are independent, symmetric and satisfy $\P(|a_k|>M)=0$ for some $M<\infty$ and $\E|a_k|\geq m>0$ 
for all $k$. 
For $1\leq k\leq n$ let 
\begin{equation*}
X_k =\sum_{l=1}^k\frac{1}{l}a_{k-l+1}+\sum_{l=2}^{n-k+1}\frac{1}{l}a_{l+k-1}.
\end{equation*}
Then there exists a constant 
$c>0$ independent of $M$ and $m$ such that for all large $n$
\begin{equation*}
cm\log\log n\leq \E \Big(\max_{1\leq k\leq n} |X_k|\Big) \leq c^{-1}M\log\log n.
\end{equation*}
\end{proposition}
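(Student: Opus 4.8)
The structure of the random variable $X_k$ is that of a (truncated) discrete Hilbert-transform–type sum: each $a_j$ appears in $X_k$ with weight $1/|j-k+1|$ roughly, so that $X_k$ is a weighted sum of the i.i.d.\ symmetric bounded variables $\{a_j\}$. The key point is that the $\ell^2$ norm of the weight vector for each fixed $k$ is bounded (it is a partial sum of $\sum 1/l^2$), so each individual $X_k$ has bounded variance — this is exactly why we get $\log\log n$ and not $\sqrt{\log n}$. For the upper bound, I would first control a single $X_k$: since the coefficients are bounded by $M$ and the weights are $\ell^2$-summable, Bernstein's (or Bennett's) inequality gives a subgaussian-type tail $\P(|X_k|>t)\leq 2\exp(-c t^2)$ for $t$ up to some threshold, and an exponential (Poisson-type) tail $\P(|X_k|>t)\leq 2\exp(-c t\log t)$ beyond it, because the largest single weight is $1$ and the effective "number of significant terms" is $O(1)$. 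Taking a union bound over the $n$ values of $k$, the max is governed by the point where $n\exp(-ct\log t)\approx 1$, which yields $t\asymp \log n/\log\log n$ — not quite $\log\log n$. So a naive union bound is too weak, and the real work is to exploit cancellation/structure to beat it.

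The correct upper bound must come from a chaining or Dudley-type argument that uses the strong correlations between $X_k$ and $X_{k'}$ for nearby $k,k'$. Concretely, $X_k - X_{k'}$ for $|k-k'|$ small has small variance because the dominant terms (the ones with weight $1$, i.e.\ $a_k$ and $a_{k'}$) are the only large discrepancies, and even those are bounded by $M$; more importantly, the increments over a dyadic scale telescope in a way that the metric entropy of the index set $\{1,\dots,n\}$ under the (sub)Gaussian metric induced by $\{X_k\}$ is only $O(\log\log n)$-summable after integrating Dudley's entropy integral. I would set up the pseudometric $d(k,k') = \|\text{weight vector of }X_k - X_{k'}\|_2$ (for the subgaussian regime) together with an $\ell^1$/$\ell^\infty$ control for the Poisson regime, estimate the covering numbers $N(\{1,\dots,n\}, d, \varepsilon)$, and apply a generic-chaining bound valid for sums of bounded independent variables (e.g.\ the mixed $\psi_1$–$\psi_2$ chaining of Talagrand/Adamczak). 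The main obstacle — and the part that will occupy most of the paper — is getting the covering-number estimates sharp enough that the chaining integral evaluates to $O(M\log\log n)$ rather than $O(M\sqrt{\log n})$; this requires understanding precisely how $d(k,k')$ behaves, which in turn rests on the harmonic-sum structure $\sum 1/l$ and the fact that shifting the index by $j$ changes the weight vector by something whose $\ell^2$ norm is $O(\sqrt{\log(j)}/\, ?)$ — small enough to make the entropy doubly-logarithmic.

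For the lower bound I would proceed more directly. Fix a single index, say $k = n/2$ (or more cleverly a sparse subset of indices), and show $\E|X_k|\gtrsim m$ using $\E|a_k|\geq m$ and a Paley–Zygmund / anti-concentration argument (the symmetry and boundedness give a two-sided bound on the distribution of $X_k$). To upgrade from a single $\Omega(m)$ to $\Omega(m\log\log n)$, I would select $\asymp \log\log n$ indices $k_1 < \dots < k_r$ that are extremely well separated — geometrically spaced, $k_{i+1}/k_i \to \infty$ — so that the random variables $X_{k_1},\dots,X_{k_r}$, while not independent, are \emph{nearly} independent in the sense that conditioning on the others perturbs each by only a small amount (again because the cross-weights decay like $1/\text{distance}$ and the distances grow geometrically). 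Then a standard argument (either a direct second-moment/Sudakov-type bound for the bounded case, or exhibiting an explicit sign pattern of the $a_k$'s on a positive-probability event that makes several $X_{k_i}$ simultaneously large) gives $\E\max_i |X_{k_i}| \gtrsim m \log r \asymp m\log\log n$. Wait — that would give $\log\log\log n$; the point is rather that with $r$ geometrically-spaced near-independent bounded variables of unit-order spread, $\E\max_i|X_{k_i}|$ is of order $m$ times the max of $r$ near-independent bounded symmetric variables, and to make \emph{that} grow like $\log\log n$ one actually needs $r$ to be genuinely large, i.e.\ $r \asymp \log n$, with the max of $\log n$ such variables being $\asymp \log\log n$ by their Poisson-type upper tail being matched by a Poisson-type lower tail $\P(|X_k|>t)\geq \exp(-Ct\log t)$ coming from the event that a single large-weight coefficient dominates. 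So concretely: take $r \asymp \log n$ widely separated indices, show each has lower tail $\P(|X_{k_i}|>t)\geq c\exp(-Ct\log t)$, show approximate independence across the $k_i$, and conclude $\E\max \gtrsim m\log\log n$. Balancing the separation requirement (need $\sim \log n$ indices in $\{1,\dots,n\}$ with ratios bounded below, which is fine) against the near-independence requirement is the delicate bookkeeping here, but it is routine compared to the chaining upper bound.
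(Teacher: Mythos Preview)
Your upper-bound strategy has a genuine gap. The tail estimate you quote, $\P(|X_k|>t)\le 2\exp(-ct\log t)$, is not what Bernstein/Bennett gives here, and more importantly it undersells the truth: because the weights are $1/l$, the actual tail of $X_k$ is \emph{doubly} exponential, $\P(|X_k|>t)\le C\exp(-c e^{ct/M})$, so a plain union bound over the $n$ indices already yields $\E\max_k|X_k|\lesssim M\log\log n$. The paper obtains exactly this without computing the tail, by a one-line truncation trick: write $Y_k=\sum_{l=1}^{k}\frac{1}{l}a_{k-l+1}$, peel off the first $L$ terms, which are at most $M(1+\log L)$ deterministically, and note that the remainder $\sum_{l>L}\frac{1}{l}a_{k-l+1}$ has variance $\le M^2/L$, so Hoeffding plus a union bound over $n$ values of $k$ gives $O(M\sqrt{(\log n)/L})$. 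Choosing $L=\log n$ balances the two pieces at $M\log\log n$. Your proposed chaining argument is not only unnecessary but unlikely to succeed: the natural $\psi_1/\psi_2$ increment metrics are calibrated to the Gaussian answer $\sqrt{\log n}$, and nothing in the increment structure $X_k-X_{k'}$ (which always contains a weight-$1$ term $a_k-a_{k'}$) produces a doubly-logarithmic entropy integral.

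Your lower-bound sketch also misfires on the parameters. With the correct double-exponential lower tail $\P(|X_{k_i}|>t)\gtrsim \exp(-Ce^{ct})$ (from the event ``the nearest $e^{ct}$ signs align''), taking $r\asymp\log n$ near-independent indices yields only $\E\max\gtrsim m\log\log\log n$; to reach $m\log\log n$ you need $r$ \emph{polynomial} in $n$. The paper gets this by partitioning $\{1,\dots,n\}$ into $\approx n/\log n$ disjoint blocks of length $\approx c_1\log_2 n$ and invoking the Erd\H{o}s--R\'enyi run statistic: with probability $1-e^{-n^{c}}$ at least one block has a $(1+c_2)/2$ fraction of $+1$'s. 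At the center $k^{*}$ of that block, the local part of $X_{k^{*}}$ contributes $\gtrsim c_2 m\sum_{l\le c_1(\log n)/2}1/l\asymp m\log\log n$ (via a short combinatorial lemma computing the conditional expectation given the sign count), while the far part is $O(M)$ uniformly over all candidate centers by Hoeffding. This is the mechanism you should replace your ``$\log n$ geometrically spaced indices'' plan with.
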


\begin{proof} 
We assume that $n$ is large enough for several simple inequalities to hold. 
For $1\leq k\leq n$ we set 
\begin{equation}
Y_k=\sum_{l=1}^k\frac{1}{l}a_{k-l+1}\label{defY}
\end{equation}
and 
\begin{equation}
Z_k=\sum_{l=2}^{n-k+1}\frac{1}{l}a_{l+k-1}.\label{defZ}
\end{equation}
We have 
\begin{eqnarray*}
\E \Big(\max_{1\leq k\leq n} |X_k|\Big) &\leq & \E \Big(\max_{1\leq k\leq n} |Y_k|\Big) +\E\Big(\max_{1\leq k\leq n} |Z_k|\Big).
\end{eqnarray*}
Since the $\{a_k\}_{k=1}^\infty$ are not required to be identically distributed, 
$\max_{1\leq k\leq n} |Y_k|$ and $\max_{1\leq k\leq n} |Z_k| $ are not necessarily identically distributed. 
However, the same technique can be used to bound the expectation of both terms, 
and so we give the argument for the $Y$ term and then apply it to both. 
 
Now let $L$ be a number $2\leq L\leq n$ to be chosen later. 
If $k\leq L$, then 
\begin{equation}
|Y_k|\leq M\sum_{l=1}^k\frac{1}{l}\leq M+M\log(k)\leq M(1+\log L).\label{useL}
\end{equation}
If $k>L$ define 
\begin{equation}
\YL_k=\sum_{l=L+1}^k\frac{1}{l}a_{k-l+1}.\label{defYL}
\end{equation}
Therefore 
\begin{eqnarray*}
\E \Big(\max_{1\leq k\leq n} |Y_k|\Big)&\leq & \E \Big(\max_{1\leq k\leq L} |Y_k|+\max_{L<k\leq n}|Y_k|\Big)\\
&\leq & \E \Big(\max_{1\leq k\leq L}|Y_k|+ \max_{L<k\leq n}\Big|\sum_{l=1}^L\frac{1}{l}a_{k-l+1}+ \YL_k\Big|\Big)\\
&\leq& 2M(1+\log L)+\E \Big(\max_{L<k\leq n}| \YL_k|\Big).
\end{eqnarray*}
Using $\sum_{l=L+1}^\infty \frac{1}{l^2}\leq \frac{1}{L}$, Hoeffding's inequality, Theorem~{2} in~\cite{Hoe63}, gives
\begin{equation*}
\P\Big(\Big|\sum_{l=L+1}^k\frac{1}{l}a_{k-l+1} \Big|>t\Big)\leq 2e^{-Lt^2/2M^2}\label{asum}
\end{equation*}
for every $t>0$ and for every $k=L+1,\ldots,n$.
Then, for any $\delta>0$
\begin{eqnarray}
\E \Big(\max_{L<k\leq n}\Big|\YL_k\Big|\Big)&=& \int_0^\infty \P\Big(\max_{L<k\leq n}|\YL_k|>t\Big)dt\nonumber\\
&\leq &\delta+\int_\delta^\infty \P\Big(\max_{L<k\leq n}|\YL_k|>t\big)dt\label{union-n}\\
&\leq & \delta+ 2n\int_\delta^\infty e^{-Lt^2/2M^2}dt\label{factor-n}\\
&=&\delta +\frac{n\sqrt{2\pi}M}{\sqrt{L}}e^{-\delta^2 L/2M^2}.\label{abc}
\end{eqnarray}
Setting $\delta=M\frac{\sqrt{2\log n}}{\sqrt{L}}$, 
we have
\begin{eqnarray*}
\eqref{abc}&\leq& M\frac{\sqrt{2\log n}}{\sqrt{L}}+M\frac{\sqrt{2\pi}}{\sqrt{L}}.
\end{eqnarray*}
Therefore 
\begin{eqnarray*}
\E \Big(\max_{1\leq k\leq n} |Y_k|\Big)&\leq & 2M(1+\log L) + M \frac{\sqrt{2\log n}}{\sqrt{L}}+M\frac{\sqrt{2\pi }}{\sqrt{L}} .
\end{eqnarray*}
Setting $L=\log n$, we have
\begin{eqnarray*}
\E \Big(\max_{1\leq k\leq n} |Y_k|\Big)&\leq & 2M(2+\log \log n) + M\frac{\sqrt{2\pi}}{\sqrt{\log n}},
\end{eqnarray*}
and for $n$ large enough, applying the same argument to $Z_k$,  
\begin{eqnarray*}
\E \Big(\max_{1\leq k\leq n} |X_k|\Big)&\leq & 4M\log \log n .
\end{eqnarray*}

Now we prove the lower bound. 
Let $\{\epsilon_k\}_{k=1}^\infty$ be independent symmetric $\pm 1$ random variables, so that $\{\epsilon_k |a_k|\}_{k=1}^\infty$ has the same distribution 
as $\{a_k\}_{k=1}^\infty$. 
Let $[\cdot]$ denote the integer part of a positive real number. 
For a constant $c_1>0$, consider the subintervals 
$\{j[c_1\log_2 n]+1,\ldots,(j+1)[c_1\log_2 n]\}$ for $j=0,\ldots,[n/[c_1\log_2 n]]-1$. 
Let $J$ denote the number of such intervals, i.e. $J=[n/[c_1\log_2 n]]$. 
In \cite{ER70} it is shown in equation (2.14) that there exist constants $0<c_1,c_2,c_3$ such that with probability at least $1-e^{-n^{c_3}}$, 
the sum of the r.v. $\{\epsilon_i\}$ corresponding to at least one of the subintervals just described of length $[c_1\log_2 n]$ is at least $c_2\log_2 n$. 

For completeness we sketch the argument given in~\cite{ER70}. 
%First, note that if the sum of $\pm 1$'s in an interval of length $[c_1\log_2 n]$ is at least $c_2\log_2 n$, then the number of $+1$'s 
%is at least $\frac{1+c_2}{2}c_1\log_2 n$. 
Setting $\gamma=\frac{1+c_2}{2}$ and $K= c_1 \log_2 n$,
\begin{eqnarray}
\lefteqn{\P(\textnormal{number of +1's in each interval is less than}\;\gamma c_1\log_2 n )}\nonumber\\
&\leq & (\P(\textnormal{number of +1's in one interval is less than}\;\gamma c_1\log_2 n ))^{\frac{n}{K}-1}\nonumber\\
&=& (1-\P(\textnormal{number of +1's in one interval is  greater than}\;\gamma c_1\log_2 n ))^{\frac{n}{K}-1}\nonumber\\
&=&\Bigg( 1-2^{- K}  \sum_{\gamma K\leq l\leq K}   \binom{K}{l} \Bigg)^{\frac{n}{K}-1}\nonumber\\
&\leq & (1- A_1   K^{-1/2} 2^{K(h( \frac{1+c_2}{2})-1)})^{\frac{n}{K}-1}\label{usesterling},
\end{eqnarray}
where in~\eqref{usesterling} we have used Stirling's formula for $\frac{1}{2}<\gamma<1$, where for $0<x<1$, 
$h$ is the entropy $h(x)=x\log_2(\frac{1}{x})+(1-x)\log_2(\frac{1}{1-x})$. 
By choosing $c_2$ small enough, 
$h( \frac{1+c_2}{2})-1=\frac{-1}{c_1}+\frac{2\delta_2}{c_1} <0$ for some $0<\delta_2<\frac{1}{2}$. 
Then 
\begin{eqnarray*}
 \lefteqn{(1- A_1   (c_1\log_2 n)^{-1/2} 2^{c_1\log_2 n(h( \frac{1+c_2}{2})-1)}  )^{\frac{n}{c_1\log_2 n}-1} }\\ 
&=& (1- A_1   (c_1\log_2 n)^{-1/2} 2^{-(1-2\delta_2)\log_2 n} )^{\frac{n}{c_1\log_2 n}-1}\\
&= &  (1- A_1   (c_1\log_2 n)^{-1/2} n^{-(1-2\delta_2)} )^{\frac{n}{c_1\log_2 n} -1}     \\
&\leq & (1- (c_1\log_2 n) n^{-(1-\delta_2)} )^{\frac{n}{c_1\log_2 n}-1}  \\
&\leq & C e^{-n^{\delta_2}}(1- (c_1\log_2 n) n^{-(1-\delta_2)} )^{-1}\\
&\leq & e^{-\frac{1}{2}n^{\delta_2}}
\end{eqnarray*}
for $n$ large enough. 
This proves the claim made above.

We now consider the random variables $X_{k_j}$ for $k_j=[[c_1\log_2 n]/2]+j[c_1\log_2 n]$, $j=0,\ldots, [n/[c_1\log_2 n]]-1$ and 
bound the size of each weighted sum of the $\{a_k\}$ outside the interval containing $k_j$. 
That is, using Hoeffding's inequality again, for a given $k_j$, 
\begin{eqnarray}
\P\left(\left|  \sum_{l=[c_1\log_2 n/2]+1}^{ k_j}\frac{1}{l}a_{k_j-l+1}+\sum_{l=[c_1\log_2 n/2]+1}^{n-k_j}\frac{1}{l}a_{l+k_j-1}  \right|>t\right)\label{eachsum}\\
\leq 2 \exp(-[c_1 \log_2 n] \;t^2/2M^2).\nonumber
\end{eqnarray}
The probability that $t$ is exceeded for some $k_j$ is bounded by 
\begin{equation}
2n \exp(-[c_1 \log_2 n] \;t^2/2M^2).\nonumber
\end{equation}
By just setting $t=4M/\sqrt{c_1}$  
we have that each sum of the form inside~\eqref{eachsum} is bounded by $4M/\sqrt{c_1}$ with probability at least $1-\frac{1}{n}$. 

Let $E_0$ denote the event that both at least one subinterval of length $[c_1\log_2 n]$ satisfying the property 
discussed from~\cite{ER70} exists and that each sum outside this interval of the form inside~\eqref{eachsum} is bounded by $4M/\sqrt{c_1}$. 
This event occurs with probability at least $1-\frac{2}{n}$ for large $n$. 
That is, when $E_0$ occurs, there exists an interval where the number of $+1$'s is at least $ \frac{1+c_2}{2}[c_1\log_2 n]$ 
and the number of $-1$'s is at most $ \frac{1-c_2}{2}[c_1\log_2 n]$.
Denote by $\ks$ the $k_j$ corresponding to the interval with sufficiently many $+1$'s. 
Then,
\begin{eqnarray}
\lefteqn{\E \Big(\max_{1\leq k\leq n} |X_k|\Big)}\nonumber\\
&\geq & \E \Big(\max_{0\leq j< J-1} |X_{k_j}|\Big)\nonumber\\
&\geq& \E \Big(\max_{0\leq j< J-1} |X_{k_j}|\Big|E_0\Big)\P(E_0)\nonumber\\
&\geq&\Big(1-\frac{2}{n}\Big)\E \Big(\max_{0\leq j< J-1} |X_{k_j}|\Big|E_0\Big)\nonumber\\
&\geq& \Big(1-\frac{2}{n}\Big)\E \left(\sum_{l=1}^{[c_1\log_2 n/2]}\frac{1}{l}\epsilon_{\ks+l-1}|a_{\ks+l-1}|+\sum_{l=1}^{[c_1\log_2 n/2]}\frac{1}{l}\epsilon_{\ks-l}|a_{\ks-l}| \Bigg| E_0 \right)- \frac{2M}{\sqrt{c_1}}.\nonumber\\
&&\label{forlemma} 
\end{eqnarray}
We look at the expectation term in~\eqref{forlemma} and apply Lemma~\ref{conditionalsum}, which is given below. 
For each $r\geq  \frac{1+c_2}{2}[ c_1\log_2 n]$, all the subsets of a given interval of 
length $[c_1\log_2 n]$ with the number of $+1$'s equalling $r$ are equally probable. 
If we condition on a realization of the $\{|a_k|\}$, then the $|a_k|$ with the appropriate $\frac{1}{l}$ factors 
before them correspond to the $b$'s in the lemma. 
Since each $r\geq \frac{1+c_2}{2}[c_1\log_2 n ]$ a lower bound on the 
$k$ from Lemma~\ref{conditionalsum} is  $k=\frac{1+c_2}{2}[c_1\log_2 n]$ and
\begin{equation*}
 \frac{ 2\frac{1+c_2}{2} [c_1\log_2 n] -[c_1\log_2 n]}{[c_1\log_2 n]}=c_2.
\end{equation*}
This holds for any realization of the coefficients $\{a_k\}_{k=1}^\infty$. 
Therefore, 
\begin{eqnarray*}
\eqref{forlemma}&\geq & \Big(1-\frac{2}{n}\Big)c_2 \E \left(\sum_{l=1}^{[c_1\log_2 n/2]}\frac{1}{l}|a_{\ks+l-a}|+\sum_{l=1}^{[c_1\log_2 n/2]}\frac{1}{l}|a_{\ks-l}|   \right)-  \frac{2M}{\sqrt{c_1}}  \label{start}\\
&\geq &\Big(1-\frac{2}{n}\Big)c_2 \left(\sum_{l=1}^{[c_1\log_2 n/2]}\frac{m}{l}+\sum_{l=1}^{[c_1\log_2 n/2]}\frac{m}{l} \right)  -  \frac{2M}{\sqrt{c_1}} \\
&\geq & 2\Big(1-\frac{2}{n}\Big)c_2 m\log_2 \left(\frac{[c_1\log_2 n]}{2}\right)-\frac{2M}{\sqrt{c_1}}\\
&\geq &c_4m\log\log n\label{end}
\end{eqnarray*} 
for a constant $c_4$ for all large $n$. 
%&\geq & \E (\max_{1\leq k\leq K} |X_{[c_1\log n/2]+k[c_1\log n]|)
%\end{eqnarray} 
%\end{proposition}

\end{proof}

\begin{lemma}\label{conditionalsum}
Let $\epsilon\in \{\pm 1\}^p$ be uniformly distributed on 
\begin{equation*}
 \{x\in\{\pm 1\}^p:\; |\{i:\;x_i= 1\}|=k\}\nonumber
\end{equation*}
for some $1\leq k\leq p$ 
and let $b_1,\ldots,b_p$ be real numbers. 
Then
\begin{equation*}
 \E_{\epsilon} \sum_{i=1}^p \epsilon_i b_i= \frac{2k-p}{p}\sum_{i=1}^p b_i.
\end{equation*}

\begin{proof}
 Let $\mathcal{I}$ denote the set of subsets of $\{1,\ldots,p\}$ of cardinality $k$. 
The number of subsets $I\in\mathcal{I}$ such that $i\in I$ is equal to the number of subsets 
of $\{1,\ldots,p\}\backslash \{i\}$ of cardinality $k-1$. 
The cardinality of this set is $\binom{p-1}{k-1}=\frac{k}{p}\binom{p}{k}$. 
Therefore, 
\begin{eqnarray*}
 \E_\epsilon \Big(\sum_{i=1}^p \epsilon_i b_i\Big)&=& \binom{p}{k}^{-1} \sum_{I\in \mathcal{I}}( \sum_{i\in I}b_i -\sum_{i\notin I}b_i)\\
&=& \binom{p}{k}^{-1} (\sum_{I\in \mathcal{I}}\sum_{i\in I}b_i -\sum_{I\in \mathcal{I}}\sum_{i\notin I}b_i)\\
&=& \binom{p}{k}^{-1} (\sum_{i=1}^p b_i |I\in \mathcal{I}:\;i\in I|-\sum_{i=1}^p b_i |I\in \mathcal{I}:\;i\notin I| )\\
&=&\binom{p}{k}^{-1} \left(\frac{k}{p} \binom{p}{k} \sum_{i=1}^p b_i -\frac{p-k}{p} \binom{p}{k}\sum_{i=1}^p b_i\right)\\
&=& \frac{2k-p}{p}\sum_{i=1}^p b_i.
\end{eqnarray*}
\end{proof}

\end{lemma}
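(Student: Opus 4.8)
The plan is to reduce the claim to computing the single-coordinate expectation $\E_\epsilon \epsilon_i$ and then to invoke linearity of expectation. First I would observe that the uniform distribution on $\{x\in\{\pm1\}^p:\; |\{i:\;x_i=1\}|=k\}$ is invariant under permutations of the coordinates, since the constraint set is permutation-invariant; hence $\E_\epsilon \epsilon_i$ does not depend on $i$, and I denote this common value by $\mu$. Summing over $i$ gives $p\mu = \E_\epsilon \sum_{i=1}^p \epsilon_i$. But on the support of $\epsilon$ exactly $k$ coordinates equal $+1$ and the remaining $p-k$ equal $-1$, so $\sum_{i=1}^p \epsilon_i = 2k-p$ is deterministic, whence $\mu = (2k-p)/p$.

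Then, by linearity of expectation, $\E_\epsilon \sum_{i=1}^p \epsilon_i b_i = \sum_{i=1}^p b_i\, \E_\epsilon \epsilon_i = \mu \sum_{i=1}^p b_i = \frac{2k-p}{p}\sum_{i=1}^p b_i$, which is exactly the asserted identity.

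Alternatively, if one prefers to avoid the symmetry argument, I would compute $\E_\epsilon \epsilon_i$ directly by counting. Letting $\mathcal{I}$ be the family of $k$-subsets of $\{1,\ldots,p\}$ recording the positions of the $+1$'s, the number of $I\in\mathcal{I}$ with $i\in I$ equals the number of $(k-1)$-subsets of $\{1,\ldots,p\}\setminus\{i\}$, namely $\binom{p-1}{k-1}=\frac{k}{p}\binom{p}{k}$. Hence $\P(\epsilon_i=1)=k/p$ and $\P(\epsilon_i=-1)=(p-k)/p$, so again $\E_\epsilon \epsilon_i = k/p-(p-k)/p = (2k-p)/p$, and linearity finishes the proof.

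I do not anticipate a genuine obstacle here: the statement is an elementary consequence of the exchangeability of $\epsilon$ together with the deterministic constraint $\sum_i \epsilon_i = 2k-p$. The only point deserving a line of justification is either the permutation-invariance of the conditioning event (immediate, since one conditions on a permutation-symmetric set and the law is uniform on it) or, in the counting version, the binomial identity $\binom{p-1}{k-1}=\frac{k}{p}\binom{p}{k}$.
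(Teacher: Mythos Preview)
Your proposal is correct. Your alternative counting argument is exactly the paper's proof: the paper sums over the $k$-subsets $I$, swaps the order of summation, and uses the count $|\{I\in\mathcal{I}:i\in I\}|=\binom{p-1}{k-1}=\frac{k}{p}\binom{p}{k}$ to arrive at the same factor $(2k-p)/p$.

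Your primary argument via exchangeability is a slightly cleaner variant. Rather than counting subsets, you exploit that the law of $\epsilon$ is permutation-invariant so that $\E_\epsilon\epsilon_i$ is constant in $i$, and then read off this constant from the deterministic identity $\sum_i\epsilon_i=2k-p$. This bypasses the binomial computation entirely; the paper's route, by contrast, makes the combinatorics explicit. Both arrive at $\E_\epsilon\epsilon_i=(2k-p)/p$ and finish by linearity, so the difference is purely one of presentation.
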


Using Proposition~\ref{propdiscrete}, we can now prove the main theorem.

\begin{proof}{\bf of Theorem~\ref{maintheorem}}
We start with the upper bounds for both the Gaussian and compact support cases. 
If $t<-n$, then
\begin{eqnarray*}
 \Big|\sum_{k=1}^n a_k\frac{\sin \pi (t-k)}{\pi (t-k)}\Big|&\leq & \max_{1\leq k\leq n} |a_k|\sum_{k=n+1}^{2n}\frac{1}{\pi k}\\
&\leq & \max_{1\leq k\leq n} |a_k|,
\end{eqnarray*}
and the same argument holds for $t>2n$. 
Therefore we bound the expectation of the supremum over $t\in [-n,2n]$, which will always be at least the order of $\max_{1\leq k\leq n} |a_k|$. 
Throughout we use $|\sin x|\leq |x|$. 
We have
\begin{eqnarray*}
\sup_{t\in [-n,2n]}\Big|\sum_{k=1}^n a_k\frac{\sin \pi (t-k)}{\pi (t-k)}\Big|\label{first}&\leq & \max_{-n\leq l\leq 2n-1}\;\sup_{t\in [0,1]}\Big|\sum_{k=1}^n a_k\frac{\sin \pi (l+t-k)}{\pi (l+t-k)}\Big|\\
&=&  \max_{-n\leq l\leq 2n-1}\;\sup_{t\in [0,1]}\Big|\sum_{k=1}^n a_k\frac{(-1)^{l-k}\sin \pi t}{\pi (l+t-k)}\Big|.\label{expression}
\end{eqnarray*}

We use
\begin{eqnarray}
\lefteqn{\sup_{t\in [0,1]}\max_{-n\leq l\leq 2n-1}\Big|\sum_{k=1}^n a_k\frac{(-1)^{l-k}\sin \pi t}{\pi (l+t-k)}\Big|}\label{beginning}\nonumber\\
&\leq & \sup_{|t|\leq \frac{1}{n}}\max_{-n\leq l\leq 2n}\Big|\sum_{k=1}^n a_k\frac{(-1)^{l-k}\sin \pi t}{\pi (l+t-k)}\Big|\label{firstterm}\\
&&+\sup_{\frac{1}{n}\leq |t|\leq \frac{1}{2}}\max_{-n\leq l\leq 2n}\Big|\sum_{k=1}^n a_k\frac{(-1)^{l-k}\sin \pi t}{\pi (l+t-k)}\Big|.\label{secondterm}
%&\leq& \max_{l=1,\ldots,n^2}|\sum_{k=1}^n a_k\frac{\sin \pi (\frac{l}{n^2}-k)}{\pi (\frac{l}{n^2}-k)}|+\frac{1}{n}\max_{t\in [0,n]}|\frac{d}{dt}\sum_{k=1}^n a_k\frac{\sin \pi (t-k)}{\pi (t-k)}|\\
%&= & \max_{l=1,\ldots,n^2}|\sum_{k=1}^n a_k\frac{\sin \pi (\frac{l}{n^2}-k)}{\pi (\frac{l}{n^2}-k)}|+\frac{1}{n}\max_{t\in [0,n]}|\sum_{k=1}^n a_k \frac{-\pi(t-k)\cos \pi (t-k)-\pi \sin\pi (t-k)}{\pi^2(t-k)^2}|\\
\end{eqnarray}
For the term~\eqref{firstterm}, we choose an arbitrary $-n\leq l_0\leq 2n$ and obtain
\begin{eqnarray}
\sup_{|t|\leq \frac{1}{n}}\Big|\sum_{k=1}^n a_k\frac{(-1)^{l_0-k}\sin \pi t}{\pi (l_0+t-k)}\Big|&\leq&\max_{1\leq k\leq n} |a_k|\sup_{|t|\leq \frac{1}{n}}\sum_{k=1}^n\frac{ |\sin \pi t|}{\pi|l_0+t-k|}\nonumber\\
&\leq & \max_{1\leq k\leq n} |a_k| \sup_{|t|\leq \frac{1}{n}}\left(\frac{|\sin \pi t|}{\pi |t|}+\sum_{k=1,k\neq l_0}^n\frac{ |\sin \pi t|}{\pi |l_0+t-k|}\right)\nonumber\nonumber\\
&\leq & \max_{1\leq k\leq n} |a_k| \sup_{|t|\leq \frac{1}{n}} \left(1+\frac{1}{n}\sum_{k=1,k\neq l_0}^n\frac{ 1}{|l_0+t-k|}\right)\nonumber\\
&\leq & \max_{1\leq k\leq n} |a_k| \left(1+\frac{2}{n}\sum_{k=1}^n\frac{ 1}{k-\frac{1}{n}}\right)\nonumber\\
&\leq & \max_{1\leq k\leq n} |a_k| \left(1+\frac{2}{n}\left(\frac{n}{n-1}+1+\log n \right)\right)\nonumber\\
&\leq &2 \max_{1\leq k\leq n} |a_k|.\label{term1}
\end{eqnarray}
Since this holds for each $l$ we have bounded~\eqref{firstterm} by~\eqref{term1}.
%Now we define $\Tn$ by
%\begin{equation}
%\Tn=\{x\in [-n,2n]:\; |x-k|\geq \frac{1}{n}\;\;\textnormal{for all }k\in\N\}.
%\end{equation}
Now we look at~\eqref{secondterm}, 
and use that if $f$ is differentiable on the interval $[a,b]$, then 
\begin{equation*}
\sup_{t\in [a,b]}|f(t)|\leq \max\{|f(a)|,|f(b)|\}+|b-a|\cdot\sup_{t\in [a,b]}\Big|\Big(\frac{d}{dt} f\Big)(t)\Big|.
\end{equation*}
We then have 
\begin{eqnarray*}
\lefteqn{ \max_{-n\leq l\leq 2n-1} \sup_{\frac{1}{n}\leq |t|\leq \frac{1}{2}}\Big| \sum_{k=1}^n a_k\frac{(-1)^{l-k}\sin \pi t}{\pi (l+t-k)}\Big|}\\
& \leq & \max_{l=-n,\ldots,2n,\; r=1,\ldots,n-1} \Big|\sum_{k=1}^n a_k\frac{(-1)^{l-k}\sin \pi \frac{r}{n}}{\pi (l+\frac{r}{n}-k)}\Big|\\
&&+  \max_{-n\leq l\leq 2n} \sup_{\frac{1}{n}\leq |t|\leq \frac{1}{2}} \frac{1}{n}\Big|\frac{d}{dt}\sum_{k=1}^n a_k\frac{(-1)^{l-k}\sin \pi t}{\pi (l+t-k)}\Big|.
\end{eqnarray*}
%Now choose an arbitrary $l_0$ such that $-n\leq l_0\leq 2n$. 
For the second term we have the bound
Then 
\begin{eqnarray}
\lefteqn{ \max_{-n\leq l\leq 2n-1} \sup_{\frac{1}{n}\leq |t|\leq \frac{1}{2}} \frac{1}{n}\Big|\frac{d}{dt}\sum_{k=1}^n a_k\frac{(-1)^{l-k}\sin \pi t}{\pi (l+t-k)}\Big|}\nonumber\\
&=& \max_{-n\leq l\leq 2n-1} \sup_{\frac{1}{n}\leq |t|\leq \frac{1}{2}} \frac{1}{n}\Big|\sum_{k=1}^n a_k(-1)^{l-k}\frac{\pi^2(l+t-k)\cos \pi t-\pi \sin\pi t }{\pi^2 (l+t-k)^2}\Big|\nonumber\\
&\leq & \frac{1}{n}\max_{1\leq k\leq n}|a_k| \max_{-n\leq l\leq 2n-1}\sup_{\frac{1}{n}\leq |t|\leq \frac{1}{2}} \left( \sum_{k=1}^n\frac{1}{|l+t-k|}+\sum_{k=1}^n\frac{|\sin \pi t|}{\pi(l+t-k)^2}\right) \nonumber\\
&\leq & \frac{1}{n}\max_{1\leq k\leq n}|a_k| \max_{-n\leq l\leq 2n-1}\sup_{\frac{1}{n}\leq |t|\leq \frac{1}{2}} \left(\frac{1}{t}+ \sum_{k=1,k\neq l}^n\frac{1}{|l+t-k|}+\frac{|\sin \pi t|}{\pi t^2}+\sum_{k=1,k\neq l}^n\frac{1}{\pi(l+t-k)^2}\right)\nonumber \\
&\leq &\frac{1}{n}\max_{1\leq k\leq n}|a_k|\left( 2n+2\log n+2\right)\label{lessthanone}\nonumber\\
&\leq & 4\max_{1\leq k\leq n}|a_k|\label{term2}
\end{eqnarray}
for sufficiently large $n$.  
Thus, in both the Gaussian and compact support cases we have to find a bound on
\begin{equation}
\max_{l=-n,\ldots,2n;\;r=1,\ldots,n-1}\Big|\sum_{k=1}^n a_k\frac{(-1)^{l-k}\sin \pi \frac{r}{n}}{\pi (l+\frac{r}{n}-k)}\Big|.\label{tobound}
\end{equation}

We start with the Gaussian case. 
For $l=-n,\ldots,2n$, and $r=1,\ldots,n-1$ set 
\begin{equation*}
\xi_{l,r}= \sum_{k=1}^n a_k\frac{\sin \pi \frac{r}{n}}{\pi (l+\frac{r}{n}-k)}.
\end{equation*}
The $\{\xi_{l,r}\}$ are Gaussian random variables, so that  
using inequality~(3.13) in~\cite{LT91}, 
\begin{eqnarray}
 \E \Big(\max_{-n\leq l\leq 2n,1\leq r\leq n-1}|\xi_{l,r}|\Big)&\leq &3\sigma\sqrt{\log 3n^2}+\max_{-n\leq l\leq  2n,1\leq r\leq n-1}(\E |\xi_{l,r}|^2)^{1/2}\nonumber\\
&\leq & C_1\sigma\sqrt{\log n}.\label{fromLT}
\end{eqnarray}
We return to~\eqref{beginning}, and collecting the terms in~\eqref{term1}, \eqref{term2} and~\eqref{fromLT} we have 
\begin{eqnarray*}
\E \Big(\max_{t\in [-n,2n]}|\sum_{k=1}^n a_k\frac{(-1)^{l-k}\sin \pi (t-k)}{\pi (t-k)}|\Big)&\leq & 6\E \Big(\max_{1\leq k\leq n}|a_k|\Big)+C_1\sigma \sqrt{\log 3 n^2}\nonumber\\
&\leq &C\sigma\sqrt{\log n},
\end{eqnarray*}
where we have applied   inequality~(3.13) in~\cite{LT91} to $\E \big(\max_{1\leq k\leq n}|a_k|\big)$ as well.

Now we address the case when the $\{a_k\}_{k=1}^\infty$ are symmetrically distributed and satisfy $\P(|a_k|>M)=0$ for all $k$. 
First we replace~\eqref{tobound} for an expression with the same distribtuion:
\begin{equation*}
\eqref{tobound}\stackrel{\textnormal{dist}}{=}\max_{l=-n,\ldots,2n;\;r=1,\ldots,n-1}\Big|\sum_{k=1}^n a_k\frac{\sin \pi \frac{r}{n}}{\pi |l+\frac{r}{n}-k|}\Big|.
\end{equation*}
For $r$ satisfying $\frac{r}{n}\leq \frac{1}{2}$, 
we have 
\begin{equation*}
\max_{l=-n,\ldots,2n}\left|\sum_{k=1}^n a_k\frac{\sin \pi \frac{r}{n}}{\pi |l+\frac{r}{n}-k|}\right|
\leq M+\max_{l=-n,\ldots,2n}\left|\sum_{k=1,k\neq l}^n \frac{a_k}{|l+\frac{r}{n}-k|}\right|,
\end{equation*}
and if $\frac{1}{2}<\frac{r}{n}$ we remove the term indexed by $k=l+1$ to obtain a similar expression. 
We then have 
\begin{equation*}
\left|\sum_{k=1,k\neq l}^n \frac{a_k}{l+\frac{r}{n}-k}\right|
=\left|\sum_{j=2}^l \frac{a_{l-j+1}}{j-1+\frac{r}{n}}  +\sum_{j=2}^{n-l+1} \frac{a_{l+j-1}}{1+\frac{r}{n}-j}\right|
\end{equation*}
when $\frac{r}{n}\leq \frac{1}{2}$ and an anologous term when $\frac{r}{n}>\frac{1}{2}$. 
Now we need to slightly adjust the argument given in the proof of Proposition~\ref{propdiscrete}. 
Similar to equations~\eqref{defY} and~\eqref{defZ},
for $-n\leq l\leq 2n$ and $1\leq r\leq n-1$ we define 
\begin{equation*}
Y_{l,r}= \sum_{j=2}^l \frac{a_{l-j+1}}{j-1+\frac{r}{n}}   \label{defYa}
%\;\;\;\textnormal{and}\;\;\;Y_{l,r}=\sum_{l=2}^k \frac{a_{k-l+1}}{l+\frac{r}{n}}  \;\;\textnormal{ if }\;\;\frac{r}{n}> \frac{1}{2} 
\end{equation*}
and 
\begin{equation*}
 Z_{l,r}= \sum_{j=2}^{n-l+1} \frac{a_{l+j-1}}{1+\frac{r}{n}-j} \label{defZa}
%\;\;\;\textnormal{and}\;\;\;Z_{l,r}=  \sum_{l=2}^{n-k+1} \frac{a_{l+k-1}}{l-\frac{r}{n}}  \;\;\textnormal{ if }\;\;\frac{r}{n}> \frac{1}{2}    .
\end{equation*}
for $\frac{r}{n}\leq \frac{1}{2}$ 
and the anologous terms when $\frac{r}{n}> \frac{1}{2}$. 
We can now pursue bounds for the 
$Y$ and the $Z$ terms individually, as we did earlier. 
Just as in~\eqref{useL}, for any $L$, if $l\leq L$ and $\frac{r}{n}\leq \frac{1}{2}$ we have
\begin{equation*}
|Y_{l,r}|\leq M\Big(1+\sum_{j=1}^{l}\frac{1}{j-\frac{r}{n}}\Big)\leq M+M\log(l)\leq M(1+\log L).
\end{equation*}
We define $\YL_{l,r}$ analogously to~\eqref{defYL} and note that its variance is bounded by $M^2/(L+1)$, 
so that the same type of exponential bound applies as in Proposition~\ref{propdiscrete}. 
Where we had a maximum over $n-L$ random 
variables in~\eqref{union-n} and used the union bound, 
we now have $3n^2-L$ and again use a union bound, resulting in a factor $3n^2$ where we had $n$ in~\eqref{factor-n}. 
To take care of the $n^2$, we set $\delta=M\frac{2\sqrt{\log n}}{\sqrt{L}}$, rather than $\delta=M\frac{\sqrt{2\log n}}{\sqrt{L}}$ 
as it was earlier. 
Then the same argument as was made in the proof of Proposition~\ref{propdiscrete} applies here, 
thus giving the upper bound of $C_2\log \log n$. 

We now prove the lower bounds and start with the Gaussian case. 
Let $a_k\sim \mathcal{N}(0,\sigma^2)$ for a fixed $\sigma^2$ for all $k$. 
Then, considering $t$ at the integers, 
\begin{eqnarray*}
\E\left(\sup_{t\in \R}\left|\sum_{k=1}^n a_k \frac{\sin \pi (t-k)}{\pi (t-k)}\right|\right)&\geq & \E\left(\max_{t\in \{1,2,\ldots,n\}}\left|\sum_{k=1}^n a_k \frac{\sin \pi (t-k)}{\pi (t-k)}\right|\right)\\
&=&\E\Big( \max_{1\leq k\leq n}|a_k|\Big). 
\end{eqnarray*}
%For a fixed $t$ we define 
%\begin{equation*}
%\Yt_l =\sum_{k=1}^n a_k\frac{(-1)^{l-k}\sin \pi t}{\pi (l+t-k)}\;\;\textnormal{  for  }l=1,\ldots ,n.
%\end{equation*}
%For $l\neq m$, and assuming without loss of generality that $m>l$,  
%\begin{eqnarray*}
%\E\big(| \Yt_l -\Yt_m|^2\big)&=& \E \left(\Big|\sum_{k=1}^n a_k\left(\frac{(-1)^{l-k}\sin \pi t}{\pi (l+t-k)}-\frac{(-1)^{m-k}\sin \pi t}{\pi (m+t-k)}\right)\Big|^2\right)\\
%&=& \sigma^2  \sum_{k=1}^n \left|\frac{(-1)^{l-k}\sin \pi t}{\pi (l+t-k)}-\frac{(-1)^{m-k}\sin \pi t}{\pi (m+t-k)}\right|^2\\
%&\geq & \sigma^2\Big|\frac{\sin\pi t}{\pi t}-\frac{(-1)^{m-l}\sin \pi t}{\pi (m-l+t)}\Big|^2\\
%&\geq & \sigma^2\Big|\frac{\sin\pi t}{\pi t}-\frac{\sin \pi t}{\pi (m-l+t)}\Big|^2\\
%&\geq & \sigma^2\Big|\frac{\sin\pi t}{\pi t}-\frac{\sin \pi t}{\pi (1+t)}\Big|^2,
%\end{eqnarray*}
%so that 
%\begin{equation*}
%\E \big(| Y^{(1/2)}_l-Y^{(1/2)}_m|^2\big)\geq \sigma^2 \Big(\frac{4}{3\pi}\Big)^2.
%\end{equation*}
The lower bound follows from the standard fact that the expected maximum of $n$ 
independent 
Gaussian random variables with variance $\sigma^2$ is of order $\sigma \sqrt{\log n}$. 
%\begin{equation*}
%\E \Big(\max_{1\leq k\leq n}|a_k |\Big)\geq c_1\sigma \sqrt{\log n}.
%\end{equation*}
%This gives a lower bound for the Gaussian case.  

Lastly, we show the lower bound for the random variables with bounded support. 
We consider $t\in \{\frac{1}{2}, 2+\frac{1}{2},4+\frac{1}{2},\ldots,2[\frac{n-1}{2}]+\frac{1}{2}\}$. 
Then 
\begin{eqnarray}
 \sup_{t\in \R} \Big|\sum_{k=1}^na_k\frac{\sin \pi (t-k)}{\pi(t-k)}\Big|&\geq &\max_{0\leq l\leq [\frac{n-1}{2}]}\Big|\sum_{k=1}^na_k\frac{\sin \pi (2l+\frac{1}{2}-k)}{\pi(2l+\frac{1}{2}-k)}\Big|\nonumber\\
&=& \max_{0\leq l\leq [\frac{n-1}{2}]}\Big|\sum_{k=1}^na_k\frac{(-1)^{2l-k}}{\pi(2l+\frac{1}{2}-k)}\Big|\nonumber\\
&=&  \max_{0\leq l\leq [\frac{n-1}{2}]}\Big|\sum_{k=1}^n (-1)^{k}a_k\frac{1}{\pi(2l+\frac{1}{2}-k)}\Big|\nonumber\label{withminus}\\
&\stackrel{\textnormal{dist}}{=}& \max_{0\leq l\leq [\frac{n-1}{2}]}\Big|\sum_{k=1}^n a_k\frac{1}{\pi(2l+\frac{1}{2}-k)}\Big|\label{withoutminus},
\end{eqnarray}
where  equality~\eqref{withoutminus} 
is in distribution and 
holds due to the assumption that the $\{a_k\}_{k=1}^\infty$ are symmetrically distributed. 
The lower bound on the expectation of~\eqref{withoutminus} is proved analogously to the lower bound in Proposition~\ref{propdiscrete}.
\end{proof}

\vspace{1cm}

We close with a question. 
In the communications setting discussed in the Introduction, one is interested in methods to reduce the peak value of a signal. 
While the results presented here show that  peak value grows more mildly for shifted sinc kernels than for 
orthonormal functions on the unit interval, 
how the peak can be reduced and the limits to certain methods for doing so are still of interest. 
One such method is to allow a subset of the coefficients to be random, and choose the remaining coefficients to reduce the peak of the signal 
resulting from the random coefficients. 
Limiting behavior for this scheme in the Fourier setting on the unit interval was addressed in~\cite{BF11}. 
However, it is not readily apparent what the analogous behavior for shifted sinc kernels is. 

\def\cprime{$'$} \def\cprime{$'$} \def\cprime{$'$} \def\cprime{$'$}
\bibliographystyle{abbrv}

%\bibliographystyle{unsrt}
%\bibliography{../Bibfiles/Banach,../Bibfiles/math,../Bibfiles/cs,../Bibfiles/ee,../Bibfiles/mathbook}

\begin{thebibliography}{9}

\bibitem{BF11}
H.~Boche and B.~Farrell.
\newblock {PAPR and the Density of Information Bearing Signals in OFDM}.
\newblock {\em EURASIP Journal on Advances in Signal Processing}.
\newblock vol. 2011, Article ID 561356, 9 pages, 2011.



\bibitem{BF11a}
H.~Boche and B.~Farrell.
\newblock {On the Peak-to-Average Power Ratio Reduction Problem for Orthonormal
  Transmission Schemes}.
\newblock 2010.
\newblock Submitted.

\bibitem{ER70}
P.~Erd{\H{o}}s and A.~R{\'e}nyi.
\newblock On a new law of large numbers.
\newblock {\em J. Analyse Math.}, 23:103--111, 1970.


\bibitem{Hoe63}
W.~Hoeffding.
\newblock Probability inequalities for sums of bounded random variables.
\newblock {\em J. Amer. Statist. Assoc.}, 58:13--30, 1963.


\bibitem{KT95}
B.~Kashin and L.~Tzafriri.
\newblock Lower estimates for the supremum of some random processes.
\newblock {\em East J. Approx.}, 1(1):125--139, 1995.


\bibitem{LT91}
M.~Ledoux and M.~Talagrand.
\newblock {\em Probability in {B}anach {S}paces}.
\newblock Springer-Verlag, Berlin, 1991.

\bibitem{Lit07}
S.~Litsyn.
\newblock {\em Peak Power Control in Multicarrier Communications}.
\newblock Cambridge Univ. Press, New York, 2007.


%\bibitem{Tal05}
%M.~Talagrand.
%\newblock {\em The {G}eneric {C}haining}.
%\newblock Springer Monographs in Mathematics. Springer-Verlag, Berlin, 2005.

\end{thebibliography}

\end{document}